\documentclass[11pt]{article}

\usepackage[margin=1in]{geometry}
\usepackage{amsmath,amssymb,amsthm}
\usepackage{mathtools}
\usepackage[protrusion=true,expansion=false]{microtype}
\usepackage[hidelinks]{hyperref}
\usepackage{enumitem}
\usepackage{graphicx}
\usepackage{parskip}

\newcommand{\E}{\mathbb{E}}
\newcommand{\Q}{\mathbb{Q}}
\newcommand{\Prob}{\mathbb{P}}
\newcommand{\F}{\mathcal{F}}

\theoremstyle{plain}
\newtheorem{proposition}{Proposition}
\newtheorem{corollary}{Corollary}

\theoremstyle{remark}

\title{\vspace{-2em}Perpetual Futures for Stocks:\\
The SpaceX Pre-IPO Market}
\author{
Aditya Gupta\\
\small Stochastic Process, New York
\and
Nicholas G. Polson\\
\small Booth School of Business, University of Chicago
}
\date{\today}

\begin{document}
\maketitle
\vspace{-1.0em}

\begin{abstract}
\noindent
Robert Shiller proposed perpetual futures in 1993 to create derivative markets
for assets that are illiquid or whose price cannot be observed directly, such as
single family homes, human capital, and the consumer price index. The crypto
markets later built the instrument for a different reason and with a different
funding rule. We give a single no arbitrage result that nests both designs: the
perpetual price is the present value of a benchmark flow discounted at the
funding rate, so the funding rule chooses both the benchmark and the discount.
We give a random time change representation in which the price is the expected
spot at the first event of a clock whose intensity is the funding rate, use it to
show that stochastic volatility moves the basis only through the carry, so a
volatility risk premium and not volatility itself can break the peg, read price
discovery as the
convergence of a Doob martingale driven by a stochastic approximation, and give a
segmented market equilibrium that makes the pre listing premium structural rather
than behavioural. The June 2026 SpaceX pre-IPO market is the first large scale
realization of the idea for an equity claim, and we find that the perpetual
consensus forecast the secondary clearing price more accurately than the
bookbuilt offer. We close with other equity applications.
\end{abstract}

\medskip
\noindent\textbf{Keywords:} perpetual futures; funding rate; price discovery; no
arbitrage; martingale pricing; nonlinear filtering; stochastic volatility; time
change; market segmentation; pre-IPO equity; generative Bayesian computation.

\medskip
\noindent\textbf{JEL:} G12; G13; G14; C11.

\section{Introduction}\label{sec:intro}

Derivative markets usually presuppose a liquid, observable spot against which a
contract settles. Many economically important claims have no such spot: a single
family home, a unit of human capital, a basket of macro risk, or the equity of a
private firm. Shiller (1993a) proposed the perpetual future to reach these
claims, settling a non expiring contract each period against an observable flow
rather than a price. The design lay mostly dormant until cryptocurrency venues
rebuilt it with a different settlement, a funding payment keyed to the basis
against a spot, and made it the dominant traded instrument in that market. In
June 2026 the same instrument was used at equity scale for the first time, to
trade SpaceX before and through its public listing.

This paper gives a unified treatment and applies it to that episode. First, a
single no arbitrage result, Proposition~\ref{prop:pv}, shows that any perpetual
prices as the present value of a benchmark flow discounted at the funding rate,
so the funding rule alone separates the Shiller design from the crypto design and
the two are one instrument with two benchmarks. Second, Proposition~\ref{prop:sub}
gives a random time change representation: the price is the expected spot at the
first event of a clock whose intensity is the funding rate. Third,
Proposition~\ref{prop:exact} shows that stochastic volatility leaves the peg
intact unless it is priced into the carry, the perpetual analogue of the futures
minus forward adjustment. Fourth, we read price discovery as filtering and show
in Section~\ref{sec:bayes} that the funding rule is a feedback observer, the
price a Doob martingale of the latent value, and the peg a Bayesian fixed point.
Fifth, Proposition~\ref{prop:seg} gives a segmented market equilibrium for the
basis. Sixth, we work the SpaceX example and find that the perpetual consensus
forecast the secondary clearing price better than the offer. We write under a
fixed risk neutral measure $\Q$ throughout the pricing sections and pass to the
physical measure $\Prob$ only when reading the premium.

The paper draws three literatures together. The first is Shiller (1993a, 1993b)
on cash settled markets for unobservable assets, of which the perpetual future is
one instrument. The second is the recent no arbitrage analysis of crypto
perpetuals of He, Manela, Ross and von Wachter (2022) and Ackerer, Hugonnier and
Jermann (2024), which we nest and extend to a latent underlying. The third is the
filtering and martingale theory of price formation, from Kalman (1960) and the
martingale pricing of Harrison and Kreps (1979) to generative Bayesian
computation (Polson and Sokolov 2024), which we use to read the funding rule as a
learning rule. The contribution is to show that these are one picture: the funding
rate is at once a discount rate, a clock, an observer gain, and a learning rate.

\section{Shiller's construction}\label{sec:shiller}

Shiller's idea is to settle a contract every period against a flow index rather
than against a transactable spot price. Let $f_t$ be the perpetual price at day
$t$, let $D_t$ be the dividend or rent index for day $t$, and let $r_t$ be the
one period return on a low risk alternative asset over $[t,t+1]$. The daily cash
settlement paid from short to long is
\begin{equation}
s_t = D_t - r_t\,f_t .
\end{equation}
The long receives the index flow and pays financing on the contract price. The
position is opened at zero cost, so over $[t,t+1]$ the long earns the capital
gain plus the settlement,
\[
\Pi_t = (f_{t+1}-f_t) + s_t = f_{t+1} - f_t + D_t - r_t f_t .
\]
Imposing the no arbitrage condition $\E_t^{\Q}[\Pi_t]=0$ gives the pricing
recursion
\begin{equation}
f_t = \frac{\E_t^{\Q}[f_{t+1}] + D_t}{1+r_t},
\end{equation}
which iterates forward, under a transversality condition, to
\begin{equation}\label{eq:shiller}
f_t = \sum_{k=0}^{\infty}
\frac{\E_t^{\Q}\!\left[D_{t+k}\right]}{\prod_{j=0}^{k}(1+r_{t+j})}.
\end{equation}
The perpetual price equals the present value of the expected flow. With $r_t=r$
and $\E[D]=D$ constant this collapses to the Gordon form $f=D/r$. The point is
structural: the contract recovers a fundamental value using only an observable
flow, so it prices assets that are never traded and whose spot is never seen.
That is why Shiller aimed it at real estate, labour income, and price indices.

\section{A unified perpetual: present value at the funding rate}\label{sec:unified}

Work in continuous time with a general settlement rule.
Fix a risk neutral measure $\Q$ and a filtration $(\F_t)$. Consider a perpetual
with price $F_t$ whose holder receives an instantaneous settlement
\begin{equation}
d\Sigma_t = (a_t - b_t F_t)\,dt,
\end{equation}
where $a_t$ and $b_t>0$ are adapted, $a$ is the benchmark flow and $b$ is the
funding intensity. The two existing designs are the two choices of $(a,b)$ given
below. Entry is costless, so the cumulative gain $G_t = F_t + \Sigma_t$ is a
$\Q$ martingale.

\begin{proposition}[Perpetual price as a present value at the funding rate]
\label{prop:pv}
Let $G_t=F_t+\Sigma_t$ be a $\Q$ martingale with settlement
$d\Sigma_t=(a_t-b_t F_t)\,dt$, $b_t>0$, $\int_0^\infty b_u\,du=\infty$, and
suppose the transversality condition
$\lim_{T\to\infty}\E_t^{\Q}\!\left[e^{-\int_t^T b_u du}F_T\right]=0$ holds. Then
\begin{equation}\label{eq:pv}
F_t=\E_t^{\Q}\!\left[\int_t^{\infty}
e^{-\int_t^{s} b_u\,du}\,a_s\,ds\right].
\end{equation}
\end{proposition}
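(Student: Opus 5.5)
The natural route is the continuous time analogue of the Shiller recursion: discount the gain process at the funding rate and show the discounted price plus discounted accrued flow is a martingale. Define the discount factor $\beta_{t,s}=e^{-\int_t^s b_u\,du}$ for $s\ge t$ and, for fixed $t$, the process
\[
M_s=\beta_{t,s}F_s+\int_t^s \beta_{t,v}\,a_v\,dv,\qquad s\ge t .
\]
We will show $M$ is a $\Q$ martingale on $[t,\infty)$ (at least a local martingale that we can localize), take conditional expectations between $t$ and $T$, and then send $T\to\infty$ using transversality.

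The first step is an integration by parts. Since $\beta_{t,\cdot}$ is absolutely continuous with $d\beta_{t,s}=-b_s\beta_{t,s}\,ds$, and $dF_s=dG_s-d\Sigma_s=dG_s-(a_s-b_sF_s)\,ds$, we get
\[
d(\beta_{t,s}F_s)=\beta_{t,s}\,dF_s-b_s\beta_{t,s}F_s\,ds=\beta_{t,s}\,dG_s-\beta_{t,s}a_s\,ds,
\]
so $dM_s=\beta_{t,s}\,dG_s$. No quadratic covariation term appears because $\beta$ has finite variation. Since $0<\beta_{t,s}\le 1$ is bounded, adapted and continuous, the stochastic integral $\int_t^\cdot \beta_{t,s}\,dG_s$ is a martingale whenever $G$ is a square integrable (or, more generally, $H^1$) martingale. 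Otherwise it is a local martingale, and we localize with stopping times $\tau_n\uparrow\infty$.

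Taking $\E_t^{\Q}$ of $M_T$ gives, for each finite $T$,
\[
F_t=\E_t^{\Q}\!\left[\beta_{t,T}F_T\right]+\E_t^{\Q}\!\left[\int_t^T\beta_{t,s}a_s\,ds\right].
\]
The first term vanishes as $T\to\infty$ by the transversality hypothesis. For the second, we pass the limit inside the expectation by monotone convergence when $a\ge 0$, which is the Shiller case with a dividend index. In general we assume $\E_t^{\Q}\int_t^\infty\beta_{t,s}|a_s|\,ds<\infty$ and use dominated convergence. The condition $\int b=\infty$ guarantees $\beta_{t,T}\to 0$, so the discounted integral is the natural perpetual present value. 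It also ensures that the representation in (\ref{eq:pv}) is consistent with transversality rather than needed separately.

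The main obstacle is the technical one: upgrading the local martingale $\int\beta\,dG$ to a true martingale, and exchanging the limit with the conditional expectation for a signed flow $a$. I would first handle the local martingale issue by stopping at $\tau_n$, which gives the identity with $T\wedge\tau_n$. I would then let $n\to\infty$, using that $\beta_{t,s}F_s$ at the stopped times is controlled by the same integrability implicit in the transversality condition. Failing that, I would state the proposition under the mild standing assumption that $G$ is a true martingale with $\E\sup_{s\le T}|G_s|<\infty$ for each $T$, which makes the stochastic integral against the bounded integrand $\beta$ a true martingale directly.
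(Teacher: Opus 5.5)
Your proof is correct and follows the paper's argument: discount $F$ at the funding rate, use integration by parts to show the discounted price has drift $-\beta_{t,s}a_s\,ds$, take $\E_t$ over $[t,T]$, and let $T\to\infty$ using transversality. Your $dM_s=\beta_{t,s}\,dG_s$ is the paper's $\E_s[d(\Lambda_sF_s)]=-\Lambda_s a_s\,ds$ with $\beta_{t,s}=\Lambda_s/\Lambda_t$, and you handle the martingale property and the limit exchange more carefully than the paper, which passes over both.

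The localization step, and the fallback assumption $\E\sup_{s\le T}|G_s|<\infty$, are in fact unnecessary. Since $\beta_{t,\cdot}$ is continuous, adapted, decreasing and bounded by one, write $\int_t^T\beta_{t,s}\,dG_s=\beta_{t,T}G_T-G_t+\int_t^T b_s\beta_{t,s}G_s\,ds$ and use $\beta_{t,T}=1-\int_t^T b_s\beta_{t,s}\,ds$. Fubini and the tower property then give $\E_t\int_t^T\beta_{t,s}\,dG_s=0$ for any true martingale $G$. The needed integrability follows from $|G_s|\le\E_s|G_T|$ and $\int_t^T b_s\beta_{t,s}\,ds\le 1$.
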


\begin{proof}
The martingale property gives $\E_t[dF_t]=-\E_t[d\Sigma_t]=(b_tF_t-a_t)\,dt$.
Write $\Lambda_t=\exp(-\int_0^t b_u\,du)$, so
$d(\Lambda_s F_s)=\Lambda_s(dF_s-b_sF_s\,ds)$ and hence
$\E_s[d(\Lambda_s F_s)]=\Lambda_s(\E_s[dF_s]-b_sF_s\,ds)=-\Lambda_s a_s\,ds$.
Integrating from $t$ to $T$ and taking $\E_t$,
\[
\E_t[\Lambda_T F_T]-\Lambda_t F_t=-\,\E_t\!\int_t^T \Lambda_s a_s\,ds .
\]
Divide by $\Lambda_t$, let $T\to\infty$, and apply transversality.
\end{proof}

The perpetual price is the present value of the benchmark flow $a$ discounted at
the funding rate $b$. The funding rule fixes both the flow being averaged and the
rate at which the average decays.

\begin{corollary}[The two designs]
\label{cor:cases}
\emph{(i) Shiller.} With $a_t=D_t$ and $b_t=r_t$,
$F_t=\E_t^{\Q}\!\int_t^\infty e^{-\int_t^s r_u du}D_s\,ds$, the present value of
the dividend or rent flow at the riskless rate, the continuous time form of
\eqref{eq:shiller}. \;
\emph{(ii) Crypto basis peg.} With observable spot $S_t$ and funding paid by the
long $g_t=\kappa\,(F_t-S_t)$, the settlement received by the long is
$-\kappa(F_t-S_t)\,dt$, so $a_t=\kappa S_t$ and $b_t=\kappa$, giving
\begin{equation}\label{eq:ewma}
F_t=\kappa\,\E_t^{\Q}\!\left[\int_t^\infty e^{-\kappa(s-t)}S_s\,ds\right],
\end{equation}
an exponentially weighted average of the expected future spot with horizon
$1/\kappa$.
\end{corollary}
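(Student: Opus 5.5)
The corollary follows by specialising Proposition~\ref{prop:pv} to the two funding rules. For each design, I will read off the pair $(a_t,b_t)$ from the settlement, check the hypotheses of the proposition, and substitute into \eqref{eq:pv}. No new stochastic calculus is needed; the only substantive work is verifying the hypotheses in each case.

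\emph{Case (i).} Shiller's settlement in continuous time is $d\Sigma_t=(D_t-r_tF_t)\,dt$, the limit of $s_t=D_t-r_tf_t$ over a period of length $dt$. So $a_t=D_t$ and $b_t=r_t$. Provided $r_t>0$ with $\int_0^\infty r_u\,du=\infty$, and the transversality condition $\E_t^{\Q}[e^{-\int_t^T r_u du}F_T]\to 0$ holds (the continuous analogue of the transversality used to obtain \eqref{eq:shiller}), \eqref{eq:pv} gives the stated formula directly. To identify it as the continuous-time form of \eqref{eq:shiller}, I will note two things. The discrete discount factor $\prod_{j=0}^{k}(1+r_{t+j})^{-1}$ is the Euler approximation of $e^{-\int_t^s r_u du}$, and the sum over $k$ is a Riemann sum for $\int_t^\infty\cdot\,ds$. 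I will state this correspondence rather than prove a formal limit theorem.

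\emph{Case (ii).} The long pays $g_t\,dt=\kappa(F_t-S_t)\,dt$, so the long receives $d\Sigma_t=(\kappa S_t-\kappa F_t)\,dt$. Matching with $a_t-b_tF_t$ gives $a_t=\kappa S_t$ and $b_t=\kappa$. Since $\kappa>0$ is constant, $b_t>0$ and $\int_0^\infty\kappa\,du=\infty$ hold trivially, and $e^{-\int_t^s b_u du}=e^{-\kappa(s-t)}$. Substituting into \eqref{eq:pv} and pulling the constant $\kappa$ out of the expectation yields \eqref{eq:ewma}.

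For the interpretation, I will observe that $\kappa e^{-\kappa(s-t)}$ is a probability density on $[t,\infty)$ with mean $t+1/\kappa$. Hence $F_t$ is a weighted average of $\E_t^{\Q}[S_s]$ with exponential weights and horizon $1/\kappa$. Interchanging $\E_t^{\Q}$ and $\int ds$ by Fubini requires $\int_t^\infty e^{-\kappa(s-t)}\E_t^{\Q}|S_s|\,ds<\infty$; this is where integrability enters.

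The main obstacle is transversality in case (ii). It is an assumption on $F$ rather than on $S$, so to make it checkable I will try the sufficient condition $\E_t^{\Q}|F_T|=o(e^{\kappa(T-t)})$. With a nonnegative spot and $\E_t^{\Q}[S_T]$ growing slower than $e^{\kappa T}$ (e.g.\ $S$ discounted at a rate $r<\kappa$ is a martingale), this is consistent with the candidate solution \eqref{eq:ewma} itself. For the corollary as stated, I will simply carry transversality over as a hypothesis inherited from Proposition~\ref{prop:pv}.
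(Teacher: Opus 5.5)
Your proposal is correct and follows the paper's route: the corollary is Proposition~\ref{prop:pv} with $(a_t,b_t)=(D_t,r_t)$ or $(a_t,b_t)=(\kappa S_t,\kappa)$ substituted into \eqref{eq:pv}, and the paper gives no further argument. Your explicit checks of the inherited hypotheses are care the paper leaves implicit, in particular that case (i) needs $r_t>0$ and $\int_0^\infty r_u\,du=\infty$, and that transversality is carried over as an assumption rather than derived.
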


\begin{corollary}[The peg and its fast funding limit]
\label{cor:peg}
If $S$ is a $\Q$ martingale then $F_t=S_t$ for every $\kappa>0$. More generally,
if $s\mapsto\E_t^{\Q}[S_s]$ is right continuous at $s=t$, then $F_t\to S_t$ as
$\kappa\to\infty$.
\end{corollary}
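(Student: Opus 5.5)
The plan is to start from the representation \eqref{eq:ewma} of Corollary~\ref{cor:cases}(ii) and move the conditional expectation inside the time integral. The interchange is justified by Fubini for conditional expectations. It needs $\E^{\Q}\int_t^\infty e^{-\kappa(s-t)}|S_s|\,ds<\infty$, which is implicit in the transversality hypothesis of Proposition~\ref{prop:pv} and the finiteness of $F_t$. This gives
\[
F_t=\kappa\int_t^\infty e^{-\kappa(s-t)}\,m_t(s)\,ds,\qquad m_t(s):=\E_t^{\Q}[S_s].
\]
Everything then reduces to the deterministic (path-by-path, given $\F_t$) function $s\mapsto m_t(s)$ integrated against the probability density $\kappa e^{-\kappa(s-t)}$ on $[t,\infty)$.

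For the first claim, if $S$ is a $\Q$ martingale then $m_t(s)=S_t$ for all $s\ge t$. Hence $F_t=S_t\,\kappa\int_t^\infty e^{-\kappa(s-t)}ds=S_t$ for every $\kappa>0$. I would also remark that transversality holds automatically here, since $\E_t[e^{-\kappa(T-t)}S_T]=e^{-\kappa(T-t)}S_t\to0$. The hypotheses of Proposition~\ref{prop:pv} are therefore consistent with $F\equiv S$.

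For the fast funding limit, substitute $u=\kappa(s-t)$ to get $F_t=\int_0^\infty e^{-u}\,m_t(t+u/\kappa)\,du$. As $\kappa\to\infty$, $t+u/\kappa\downarrow t$ for each fixed $u$. Right continuity gives $m_t(t+u/\kappa)\to m_t(t)=S_t$, using that $S_t$ is $\F_t$-measurable. The conclusion then follows by dominated convergence, which is the main obstacle: pointwise convergence needs an integrable dominating function in $u$, uniform in large $\kappa$.

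To handle the domination, I would split the integral at a fixed $u_0$. On $[0,u_0]$, right continuity at $t$ gives local boundedness of $m_t$ on some interval $[t,t+\delta]$, so the integrand is bounded there once $\kappa\ge u_0/\delta$. The tail is controlled by an assumed growth condition, such as $|m_t(s)|\le C e^{c(s-t)}$ for some $c$, as would follow from a bounded drift under $\Q$. That condition yields $|e^{-u}m_t(t+u/\kappa)|\le Ce^{-u(1-c/\kappa)}$, which is integrable uniformly for $\kappa>2c$. Equivalently, the tail can be absorbed using the finiteness of $F_t$ at a single reference $\kappa_0$. Because $e^{-\kappa(s-t)}\le e^{-\kappa_0(s-t)}$ for $\kappa\ge\kappa_0$, the tail $\kappa\int_{t+\delta}^\infty e^{-\kappa(s-t)}|m_t(s)|ds\le \kappa e^{-(\kappa-\kappa_0)\delta}\int_{t+\delta}^\infty e^{-\kappa_0(s-t)}|m_t(s)|ds\to0$. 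This second route is the one I would use, since it needs no extra assumption beyond integrability at some $\kappa_0$. The near part then converges to $S_t$ by the right-continuity $\varepsilon$–$\delta$ argument.
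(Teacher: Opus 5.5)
Your proof is correct and follows the paper's route. The kernel $\kappa e^{-\kappa(s-t)}$ is a unit-mass density, so the martingale case averages the constant $S_t$, and the fast funding limit comes from the kernel concentrating at $t$ together with right continuity of $s\mapsto\E_t^{\Q}[S_s]$. Your near/tail split, with the tail controlled by integrability at a reference $\kappa_0$, supplies the domination that the paper's appeal to weak convergence leaves implicit: weak convergence alone only handles bounded continuous test functions, and $s\mapsto\E_t^{\Q}[S_s]$ need not be either.
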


\begin{proof}
The kernel $\kappa e^{-\kappa(s-t)}$ is the density of $t+\mathrm{Exp}(\kappa)$,
of unit mass. If $\E_t^{\Q}[S_s]=S_t$ then \eqref{eq:ewma} averages a constant and
returns $S_t$. As $\kappa\to\infty$ the density converges weakly to the point mass
at $t$, and right continuity of $s\mapsto\E_t^{\Q}[S_s]$ gives $F_t\to S_t$.
\end{proof}

Shiller's design is self anchoring: it settles on a real flow and needs no
external price. The crypto design is basis anchored: it needs an observable
spot, and the peg is exact only in the fast funding limit or when the spot is a
$\Q$ martingale. This is the precise sense in which the two are the same
instrument with two benchmarks.

\section{The perpetual as a random time change}\label{sec:timechange}

Proposition~\ref{prop:pv} has a probabilistic reading that is useful for the
stochastic volatility analysis that follows. Take the peg form $a_s=b_s S_s$, so
that $a/b=S$; this is the crypto contract of Corollary~\ref{cor:cases}(ii) with a
possibly time varying intensity. Then \eqref{eq:pv} becomes
\begin{equation}\label{eq:peg-pv}
F_t=\E_t^{\Q}\!\left[\int_t^\infty b_s\,e^{-\int_t^s b_u\,du}\,S_s\,ds\right].
\end{equation}
The kernel $b_s\exp(-\int_t^s b_u\,du)$ is the density in $s$ of the first event
after $t$ of a Cox process with intensity $b$. Writing
$\tau=\inf\{s\ge t:\int_t^s b_u\,du\ge \mathcal E\}$ with $\mathcal E\sim
\mathrm{Exp}(1)$ independent of $\F$, the price is an expected spot at a random
time.

\begin{proposition}[Subordination]\label{prop:sub}
For the peg form, $F_t=\E_t^{\Q}[S_\tau]$, the expected spot at the first event
of a clock whose intensity is the funding rate.
\end{proposition}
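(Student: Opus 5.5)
The plan is to start from the peg-form present value \eqref{eq:peg-pv} of Proposition~\ref{prop:pv} and show that its right-hand side equals $\E_t^{\Q}[S_\tau]$. I will do this by conditioning on the whole path of $(b,S)$ and integrating out only the independent exponential $\mathcal E$. Write $B_s=\int_t^s b_u\,du$. Since $b>0$, $B$ is continuous and strictly increasing in $s\ge t$, with $B_t=0$. The hypothesis $\int_0^\infty b_u\,du=\infty$ gives $B_s\uparrow\infty$. Hence $\tau$ is almost surely finite, and on the path level $\{\tau>s\}=\{\mathcal E>B_s\}$.

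\textbf{Step 1: the conditional law of $\tau$.} Let $\mathcal G=\F_\infty$. Because $\mathcal E$ is independent of $\F$,
\[
\Q(\tau>s\mid\mathcal G)=e^{-B_s}.
\]
Differentiating in $s$, the conditional density of $\tau$ on $[t,\infty)$ is $b_s e^{-B_s}$. This is exactly the kernel in \eqref{eq:peg-pv}, and it has unit mass because $B_\infty=\infty$.

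\textbf{Step 2: evaluate $S_\tau$ given the path.} Since $S$ is $\mathcal G$-measurable and $\tau$ has the conditional density from Step~1,
\[
\E^{\Q}[S_\tau\mid\mathcal G]=\int_t^\infty b_s e^{-B_s}S_s\,ds.
\]
Formally this is Fubini applied to $S_\tau=\int_t^\infty S_s\,\delta_\tau(ds)$. To make it rigorous, I will write $S_\tau$ as a measurable function of $(\omega,\mathcal E)$, using the fact that $\tau$ is jointly measurable. I then integrate over $\mathcal E$ with $\omega$ fixed.

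\textbf{Step 3: tower property.} Because $\F_t\subset\mathcal G$ and $\mathcal E$ is independent of $\F$,
\[
\E_t^{\Q}[S_\tau]=\E_t^{\Q}\bigl[\E^{\Q}[S_\tau\mid\mathcal G]\bigr].
\]
By Step~2 this is the right-hand side of \eqref{eq:peg-pv}. By Proposition~\ref{prop:pv} that right-hand side equals $F_t$. Here $\E_t$ is understood with respect to $\F_t\vee\sigma(\mathcal E)$ restricted to $\F_t$; independence makes this the same as conditioning on $\F_t$.

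\textbf{Main obstacle: integrability.} The Fubini and tower steps need $\E_t^{\Q}\int_t^\infty b_s e^{-B_s}|S_s|\,ds<\infty$. I will first prove the identity for $S\ge0$, where Tonelli applies with no condition. I will then pass to general $S$ by splitting $S=S^+-S^-$, under the finiteness already implicit in Proposition~\ref{prop:pv}. For a spot price, $S\ge0$ is natural, so this mild assumption suffices. A secondary technical point is measurability of $s\mapsto S_s$; I will assume $S$ is progressively measurable, for example càdlàg, so that $S_\tau$ is a well-defined random variable.
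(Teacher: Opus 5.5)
Your proposal is correct and follows the paper's proof. You condition on the whole path ($\mathcal G=\F_\infty$), read off the conditional density $b_s e^{-\int_t^s b_u\,du}$ of $\tau$, integrate out $\mathcal E$ by Fubini, and use the tower property with $\F_t\subset\mathcal G$ to recover the right side of \eqref{eq:peg-pv}; your additions (finiteness of $\tau$, Tonelli for $S\ge 0$, progressive measurability) only make explicit what the paper leaves implicit, but drop the remark about $\F_t\vee\sigma(\mathcal E)$, since $\E_t$ must mean conditioning on $\F_t$ alone, and conditioning on $\mathcal E$ as well fixes the threshold and gives a different quantity.
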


\begin{proof}
Condition on the path. The law of $\tau$ given $\F$ has density
$b_s e^{-\int_t^s b_u du}$ on $[t,\infty)$, so
$\E_t[S_\tau]=\E_t\int_t^\infty b_s e^{-\int_t^s b_u du}S_s\,ds$, which is the
right side of \eqref{eq:peg-pv}.
\end{proof}

With constant $b=\kappa$ the clock is Poisson and $\tau=t+\mathrm{Exp}(\kappa)$,
recovering the exponentially weighted average of
Corollary~\ref{cor:cases}(ii). The peg is now transparent. If $S$ is a $\Q$
martingale and the clock is independent of $S$, optional sampling gives
$\E_t[S_\tau]=S_t$, so $F_t=S_t$ exactly, for any intensity. Deviations from the
peg are deviations from independence between the spot and the funding clock. The
stochastic volatility section identifies that dependence as the source of the
basis.

\begin{figure}[h]
\centering
\includegraphics[width=0.72\textwidth]{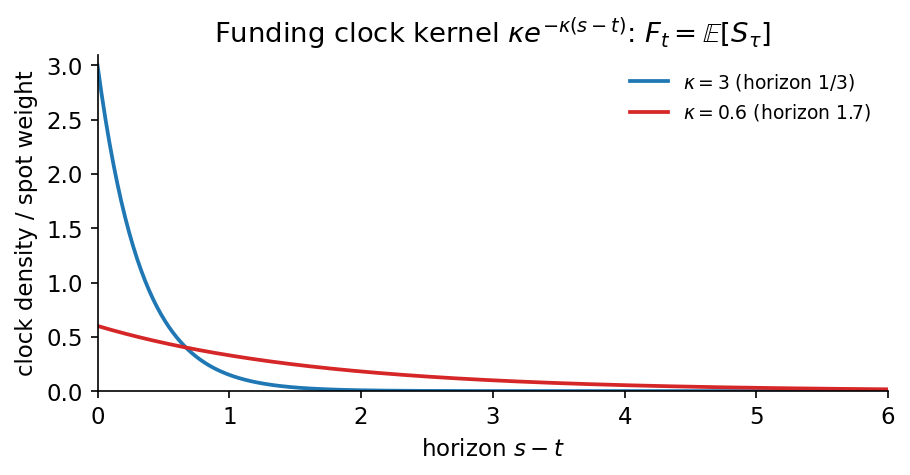}
\caption{The perpetual averages the future spot against the funding clock kernel
$\kappa e^{-\kappa(s-t)}$. A high funding intensity $\kappa$ concentrates the
weight near the present and tightens the peg; a low intensity samples further
out. The price is the expected spot at the first event of this clock.}
\end{figure}

This is a subordination in the sense of Bochner: the funding rule time changes the
spot by the random clock $\tau$, and the perpetual price is the value of the
subordinated process. When the intensity is constant the subordinator is the
exponential clock; when it depends on a volatility state the clock is doubly
stochastic, and the dependence between the subordinator and the spot is exactly
the channel through which volatility enters.

\section{A no arbitrage band under capped funding}\label{sec:band}

Real contracts cap funding, $|g_t|\le\bar g$, and the basis trade that would
enforce the peg is not free. Suppose an arbitrageur who is long spot and short
perpetual when $F_t>S_t$ pays a carry rate $\gamma>0$ for margin, collateral and
borrow, and collects funding $g_t=\min\{\kappa(F_t-S_t),\,\bar g\}$.

\begin{proposition}[Funding enforceable band]
\label{prop:band}
In the uncapped regime the basis is pinned at $F_t-S_t=\gamma/\kappa$. Funding
can discipline the basis only while $\kappa\,|F_t-S_t|\le\bar g$, that is inside
the band
\begin{equation}
|F_t-S_t|\le \frac{\bar g}{\kappa}.
\end{equation}
Outside this band funding saturates at $\bar g$ and the only remaining
discipline is a convergence trade, which requires a view on $S$.
\end{proposition}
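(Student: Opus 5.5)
The plan is to read the proposition as an arbitrageur's break-even condition and to treat the uncapped and capped regimes separately.

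\emph{Uncapped regime.} I would set up the cash-and-carry position: long one unit of spot, short one perpetual, opened when $F_t>S_t$. Over $[t,t+dt]$ this position has three components:
\begin{itemize}
\item it collects funding $g_t\,dt$ from the long side of the perpetual;
\item it pays carry $\gamma\,dt$ for margin, collateral and borrow;
\item it carries the price exposure $dS_t-dF_t$.
\end{itemize}
Since the position is hedged in price, I would take the convergence term as zero-mean under $\Q$ to first order. This is legitimate when the arbitrageur holds the position and the basis is stationary, or equivalently when we look at the instantaneous carry. Free entry then drives the net flow to zero, so $g_t=\gamma$. In the uncapped regime $g_t=\kappa(F_t-S_t)$, and solving gives the pinned basis $F_t-S_t=\gamma/\kappa$. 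The symmetric argument (short spot, long perpetual) covers $F_t<S_t$ with the same magnitude, under the assumption that the borrow cost is also $\gamma$.

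\emph{Capped regime and the band.} The funding rule is $g_t=\min\{\kappa(F_t-S_t),\bar g\}$, and it is strictly responsive to the basis only where $\kappa|F_t-S_t|<\bar g$. I would argue the two sides of the band directly:
\begin{itemize}
\item Inside the band, a marginal widening of the basis raises funding at rate $\kappa$. That restores the arbitrageur's incentive, so the feedback of Corollary~\ref{cor:peg} is operative.
\item Outside the band, $\partial g/\partial(F-S)=0$: funding is constant at $\bar g$ whatever the basis. A position held outside the band therefore earns $\bar g-\gamma$ per unit time, independent of $F-S$, and its remaining payoff is $-d(F_t-S_t)$.
\end{itemize}
It follows that outside the band the only source of profit is the change in the basis itself. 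This is a convergence trade whose value depends on the arbitrageur's forecast of $S$ and $F$. Without a model of $S$ (for instance, a latent spot as in the pre-IPO case), no static position locks it in.

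\emph{Main obstacle.} The hardest step is making the "pinned" claim rigorous. The zero-profit condition equates only expected flows, and it ignores the mark-to-market variance of $F-S$. I would handle this by stating explicitly the idealization the proposition implicitly uses:
\begin{itemize}
\item the arbitrageur is risk neutral and unconstrained in the uncapped regime;
\item the basis is a constant steady state.
\end{itemize}
With these assumptions the expected convergence term vanishes, and the equality $g=\gamma$ is exact. As a consistency check, $\gamma/\kappa$ lies inside the band exactly when $\gamma\le\bar g$, which I would note as the condition for the uncapped regime to be self-consistent.
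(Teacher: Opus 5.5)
Your argument is essentially the paper's: the basis trade earns $g_t-\gamma$, free entry in the uncapped regime forces $g_t=\gamma$ and hence $F_t-S_t=\gamma/\kappa$, and beyond $\kappa|F_t-S_t|=\bar g$ funding is flat at $\bar g$, so a wider basis is no longer removed by collecting funding and only a bet on convergence of $F$ to $S$ remains. Your explicit idealizations (a risk-neutral arbitrageur and a steady-state basis, so that the convergence term $-d(F_t-S_t)$ has zero mean) and your consistency check $\gamma\le\bar g$ state precisely what the paper's proof leaves implicit.
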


\begin{proof}
The basis trade earns $g_t-\gamma$ per unit time. While uncapped,
$g_t=\kappa(F_t-S_t)$, and entry continues until $g_t=\gamma$, giving
$F_t-S_t=\gamma/\kappa$. Funding rises with the basis only until
$\kappa(F_t-S_t)=\bar g$; beyond that point $g_t\equiv\bar g$ regardless of the
basis, so a larger basis cannot be removed by collecting funding, and closing it
requires betting that $F$ will converge to $S$.
\end{proof}

The band $\bar g/\kappa$ is tight when $\kappa$ is large and the spot is hard.
For a private firm there is no hard $S$, so $\kappa$ is keyed to a reference
estimate and the band is centred on a soft number. Ackerer, Hugonnier and
Jermann (2024) give sharp deviation results in the frictional case.

\section{Replication and equivalence with a hard spot}\label{sec:equiv}

When a genuine tradable spot exists the choice of benchmark does not matter:
both designs return the spot.

\begin{proposition}[Equivalence with a hard spot]
\label{prop:equiv}
Let $S_t$ be a tradable spot paying dividend flow $D_t$, so under $\Q$,
$\E_t[dS_t]=(r_tS_t-D_t)\,dt$. Then the Shiller dividend settled perpetual
satisfies $F_t=S_t$. The interest adjusted basis funded perpetual satisfies
$F_t=S_t$ as well.
\end{proposition}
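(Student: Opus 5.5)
The plan is to derive both claims from Proposition~\ref{prop:pv}, by showing that the spot $S$ itself satisfies the same present value identity as the perpetual in each design. Once that holds, equality of the two representations gives $F_t=S_t$.

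\emph{Shiller case.} Take $a_t=D_t$ and $b_t=r_t$. Set $\Lambda_t=\exp(-\int_0^t r_u\,du)$. From $\E_s[dS_s]=(r_sS_s-D_s)\,ds$ we get
\[
\E_s[d(\Lambda_s S_s)]=-\Lambda_s D_s\,ds ,
\]
which is the same computation as in the proof of Proposition~\ref{prop:pv}. Integrating from $t$ to $T$, dividing by $\Lambda_t$ and letting $T\to\infty$ then gives
\[
S_t=\E_t^{\Q}\!\int_t^\infty e^{-\int_t^s r_u du}D_s\,ds .
\]
This step needs the transversality condition for $S$, namely $\E_t[e^{-\int_t^T r}S_T]\to 0$, which I will impose as the standard no-bubble assumption on the spot. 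By Corollary~\ref{cor:cases}(i) the right side equals $F_t$.

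An equivalent route is to show directly that $S$ solves the perpetual's pricing condition. The gain process $S_t+\int_0^t(D_u-r_uS_u)\,du$ is a $\Q$ martingale by the drift assumption, so $S$ is a perpetual price with the same $(a,b)$. Uniqueness under transversality, which is exactly what Proposition~\ref{prop:pv} provides, then forces $F=S$.

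\emph{Interest adjusted basis case.} I first need to define "interest adjusted". My choice is funding paid by the long equal to $\kappa(F_t-S_t)$ plus the carry $r_tF_t-D_t$, in the spirit of the premium index plus interest component used in crypto. The settlement received by the long is then
\[
d\Sigma_t=\bigl(D_t-r_tF_t-\kappa(F_t-S_t)\bigr)dt ,
\]
so $a_t=D_t+\kappa S_t$ and $b_t=r_t+\kappa$.

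I then check that $F=S$ makes $G$ a martingale. With $F=S$ the drift is
\[
(r_tS_t-D_t)+D_t-r_tS_t-\kappa\cdot 0=0 .
\]
The transversality condition with discount rate $r+\kappa$ is implied by the one for $S$ at rate $r$, since the extra factor $e^{-\kappa(T-t)}\le 1$ when $S\ge 0$, or more generally under an integrability bound. Uniqueness in Proposition~\ref{prop:pv} then yields $F_t=S_t$.

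The alternative is direct verification: compute
\[
\E_t\!\int_t^\infty e^{-\int_t^s(r+\kappa)}(D_s+\kappa S_s)\,ds
\]
by applying Itô's rule to $e^{-\int(r+\kappa)}S$, whose drift is $-e^{-\int(r+\kappa)}(D+\kappa S)$, and letting $T\to\infty$.

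The main obstacle is not the algebra. It is pinning down what "interest adjusted" means so that the peg is exact. Without the $r_tF_t-D_t$ carry term, Corollary~\ref{cor:peg} shows that $F\neq S$ unless $S$ is a $\Q$ martingale, which fails here because $S$ drifts at $r-D/S$. So the proof must make the point that the interest adjustment exactly absorbs the spot's risk-neutral drift, and that is where I would concentrate the argument. The other delicate point is the transversality assumption on $S$, which I will state explicitly.
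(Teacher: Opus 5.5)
Your argument is correct and is essentially the paper's. For the Shiller contract the paper sets $e_t=F_t-S_t$, gets $\E_t[de_t]=r_te_t\,dt$, and kills $e$ with a transversality condition on the difference alone, a slightly weaker hypothesis than your separate conditions on $F$ and $S$ but the same uniqueness argument; for the basis contract it likewise applies Proposition~\ref{prop:pv} with effective $b_t=\kappa+r_t$.

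Your explicit definition of the interest adjustment actually repairs that second step. The paper's stated flow $a_t=r_tS_t-D_t+\kappa S_t$ does not return $S_t$: with constant $r$ and $D=\delta S$ it gives $S_t(r-\delta+\kappa)/(\kappa+\delta)$. Your $a_t=D_t+\kappa S_t$ does return $S_t$. Also, because $e^{-\kappa(T-t)}$ is deterministic, your transversality at rate $r+\kappa$ follows from the one at rate $r$ with no sign condition on $S$.
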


\begin{proof}
The Shiller perpetual has $\E_t[dF_t]=(r_tF_t-D_t)\,dt$ from
Section~\ref{sec:shiller}. Put $e_t=F_t-S_t$; subtracting the two drifts gives
$\E_t[de_t]=r_t e_t\,dt$, so $\beta_t e_t$ with $\beta_t=e^{-\int_0^t r_u du}$ is
a $\Q$ martingale. Under the transversality condition
$\lim_{T\to\infty}\E_t[\beta_T e_T]=0$ this forces $e_t=0$. For the basis funded
contract, choosing funding that nets the carry of the spot leaves
$a_t=r_tS_t-D_t+\kappa S_t$ and $b_t=\kappa+r_t$ effective, and
Proposition~\ref{prop:pv} returns the spot's own present value, namely $S_t$.
\end{proof}

So the difference between the designs is entirely about the case where there is
no hard spot. There, Shiller settles on a flow that does exist, while the crypto
contract must invent a reference.

\section{Stochastic volatility enters only through carry}\label{sec:sv}

Proposition~\ref{prop:sub} writes the peg as $F_t=\E_t^{\Q}[S_\tau]$, the expected
spot at the funding clock. This pins down the role of volatility exactly.

\begin{proposition}[Exact basis]\label{prop:exact}
Let $dS_t=c_t S_t\,dt+S_t\sqrt{v_t}\,dW_t$ under $\Q$ with adapted funding
intensity $b_t>0$. For the peg form,
\begin{equation}\label{eq:exact}
F_t-S_t=\E_t^{\Q}\!\left[\int_t^\infty e^{-\int_t^s b_u\,du}\,c_s S_s\,ds\right].
\end{equation}
\end{proposition}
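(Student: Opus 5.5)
The plan is to compare $F$ with $S$ through the same discounted-martingale device used in the proof of Proposition~\ref{prop:pv}, applied to the basis $e_t=F_t-S_t$ rather than to $F_t$ itself. In the peg form the settlement is $d\Sigma_t=b_t(S_t-F_t)\,dt$, so the martingale property of $G=F+\Sigma$ gives the drift $\E_t[dF_t]=b_t(F_t-S_t)\,dt=b_te_t\,dt$. From the stated dynamics of $S$, the drift is $\E_t[dS_t]=c_tS_t\,dt$, since the $\sqrt{v_t}\,dW_t$ term is a (local) martingale increment. Subtracting the two gives
\[
\E_t[de_t]=(b_te_t-c_tS_t)\,dt .
\]
This has exactly the structure of the equation $\E_t[dF_t]=(b_tF_t-a_t)\,dt$ in Proposition~\ref{prop:pv}, with $e$ in place of $F$ and $c_tS_t$ in place of the benchmark flow $a_t$.

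I would then repeat the argument of Proposition~\ref{prop:pv} verbatim. Set $\Lambda_t=\exp(-\int_0^tb_u\,du)$. Then
\[
\E_s[d(\Lambda_se_s)]=-\Lambda_sc_sS_s\,ds .
\]
Integrating from $t$ to $T$, taking $\E_t$, and dividing by $\Lambda_t$ yields
\[
e_t=\E_t\!\left[e^{-\int_t^Tb_u du}e_T\right]+\E_t\!\int_t^Te^{-\int_t^sb_udu}c_sS_s\,ds .
\]
Letting $T\to\infty$ gives \eqref{eq:exact}, provided the boundary term vanishes. A cleaner alternative is to subtract two present-value representations directly. By \eqref{eq:peg-pv}, $F_t$ is the $b$-weighted average of $S_s$. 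By integration by parts in $s$,
\[
S_t=\E_t\!\int_t^\infty b_se^{-\int_t^sb_u du}S_s\,ds-\E_t\!\int_t^\infty e^{-\int_t^sb_udu}\,dS_s ,
\]
and the $dS$ integral reduces to the $c_sS_s\,ds$ term once the stochastic integral has zero expectation.

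The main obstacle is the limiting step, not the algebra. First, the boundary term $\E_t[e^{-\int_t^Tb}e_T]$ must vanish. This follows from the transversality condition assumed for $F$ in Proposition~\ref{prop:pv}, together with the analogous condition $\E_t[e^{-\int_t^Tb}S_T]\to0$ for the spot. I would state the latter as a standing assumption, noting that it holds whenever $\int_t^\infty b_u\,du=\infty$ and $S$ does not grow faster than the funding discount. Second, the stochastic integral $\int e^{-\int b}S_s\sqrt{v_s}\,dW_s$ must be a true martingale rather than only a local martingale. Under stochastic volatility this is where care is needed, and I would impose integrability, for instance $\E\int_t^T e^{-2\int_t^s b_u du}S_s^2v_s\,ds<\infty$ for each $T$, or else localize with stopping times and pass to the limit by dominated convergence.

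Once these two points are secured, \eqref{eq:exact} follows. It also makes the interpretation immediate: $v$ appears nowhere except through $c$, so volatility can move the basis only if it is priced into the carry.
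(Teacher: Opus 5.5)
Your proposal is correct and is essentially the paper's argument. The paper takes your ``cleaner alternative'' route: it integrates \eqref{eq:peg-pv} by parts against $M_s=e^{-\int_t^s b_u\,du}$ to get $F_t=\E_t[S_t+\int_t^\infty M_s\,dS_s]$ and then kills the stochastic integral, while your primary route via $\E_t[de_t]=(b_te_t-c_tS_t)\,dt$ is the same computation organized through the martingale property of $G$. The two conditions you flag, $\E_t[e^{-\int_t^T b_u\,du}S_T]\to0$ and a true-martingale property for $\int M_sS_s\sqrt{v_s}\,dW_s$, are precisely what the paper leaves implicit when it sets $M_\infty=0$ and appeals only to boundedness of $M$.
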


\begin{proof}
Let $M_s=e^{-\int_t^s b_u du}$, continuous and of finite variation, so
$b_s M_s\,ds=-dM_s$ and $d(M_sS_s)=M_s\,dS_s+S_s\,dM_s$. By \eqref{eq:peg-pv} and
integration by parts, with $M_t=1$ and $M_\infty=0$,
\[
F_t=\E_t\!\int_t^\infty S_s(-dM_s)=\E_t\!\left[S_t+\int_t^\infty M_s\,dS_s\right].
\]
The integrand $M$ is bounded and predictable, so the stochastic integral against
the local martingale part of $S$ has zero mean, leaving
$\E_t\int_t^\infty M_s c_s S_s\,ds$.
\end{proof}

\begin{corollary}[Volatility irrelevance]\label{cor:svirr}
If $c\equiv0$, so $S$ is a $\Q$ martingale, then $F_t=S_t$ exactly, for every
funding intensity, including one driven by the same state as $S$ and under any
leverage $d\langle W,v\rangle\ne0$. Stochastic volatility alone does not move the
basis.
\end{corollary}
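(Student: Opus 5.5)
The plan is to read Corollary~\ref{cor:svirr} directly off Proposition~\ref{prop:exact}, and then check that the proof of that proposition never used independence between the funding clock and $S$. Setting $c\equiv0$ in \eqref{eq:exact} makes the integrand vanish identically, so $F_t-S_t=0$ for every adapted $b>0$. The substance of the corollary is the quantifier ``for every funding intensity, including one driven by the same state as $S$, and under any leverage.'' The second step is therefore to audit the proof of Proposition~\ref{prop:exact}. It only requires that $M_s=e^{-\int_t^s b_u\,du}$ be adapted, continuous, of finite variation and bounded by one. These hold for any adapted $b_t>0$, whatever its correlation with $W$ or $v$. No assumption on $d\langle W,v\rangle$ enters, because the argument never conditions on the volatility path or uses a conditional law of $S$ given $b$.

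Third, I would contrast this with the optional sampling remark after Proposition~\ref{prop:sub}. That remark appeared to require independence of the clock from $S$. Here the time $\tau$ is a first event built from an adapted intensity together with an exponential variable $\mathcal E$ independent of $\F$. So $\E_t[S_\tau]=\E_t\int_t^\infty S_s(-dM_s)$, and integration by parts moves all dependence into $\int M\,dS$. The mean of that integral is zero whatever $M$ is, provided $M$ is predictable and bounded. Stochastic volatility enters only through the quadratic variation of $S$, which never appears in the first-order identity. Hence volatility can move the basis only by altering the $\Q$-drift $c$, which is the carry or risk-premium channel.

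The main obstacle is technical. With stochastic volatility, $S$ is in general only a local martingale, so $\E_t\int_t^\infty M_s\sqrt{v_s}S_s\,dW_s=0$ needs a true-martingale condition, and the boundary term $M_TS_T\to0$ in expectation needs the transversality condition of Proposition~\ref{prop:pv}. I would handle this by working on $[t,T]$ with a localizing sequence of stopping times $T_n$. First I would obtain $\E_t[M_{T\wedge T_n}S_{T\wedge T_n}]+\E_t\int_t^{T\wedge T_n}b_sM_sS_s\,ds=S_t$. Then I would pass $n\to\infty$ by monotone convergence for the integral term and uniform integrability for the boundary term, and finally let $T\to\infty$. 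The simplest way to state this is to assume $S$ is a true $\Q$ martingale, which is the meaning of ``$S$ is a $\Q$ martingale'' in the corollary. Under that assumption $(M_sS_s)$ is a supermartingale bounded by a martingale, and transversality closes the argument.
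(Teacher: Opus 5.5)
Your proposal is correct and matches the paper. The paper also reads the corollary off Proposition~\ref{prop:exact}, which is stated for an arbitrary adapted intensity $b$, by setting $c\equiv0$, and adds only the remark that this is optional sampling in the filtration enlarged by the exponential mark. Your localization step is a real improvement: it takes uniform integrability from $0\le M_sS_s\le S_s$ and transversality for the boundary term, under a true-martingale reading of ``$S$ is a $\Q$ martingale''. That supplies the rigor missing from the paper's one-line ``bounded integrand, zero mean'' claim, which can fail when leverage makes the driftless $S$ a strict local martingale.
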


The corollary is optional sampling: in the filtration enlarged by the exponential
mark, $\tau$ is a stopping time and $S$ stays a martingale, so
$\E_t[S_\tau]=S_t$. Volatility reaches the basis only through the carry $c$. If a
volatility risk premium makes $c_t=c(v_t)$, then \eqref{eq:exact} is the funding
discounted expected carry, whose sign is the sign of the priced carry and whose
size is a covariance between the funding discount and that carry. This is the
perpetual analogue of the futures minus forward adjustment of Cox, Ingersoll and
Ross (1981): the gap is the covariance between the discount, here funding, and the
asset's drift, and it vanishes when the drift is zero or the funding is
deterministic. To trade the volatility itself one writes a perpetual on realized
variance (Section~\ref{sec:apps}); its benchmark flow is $v$, not a martingale
spot, so Corollary~\ref{cor:svirr} does not apply and it carries the exposure.

\section{Price discovery as filtering}\label{sec:filtering}

The pre-IPO problem is that the fundamental is latent. Let $x_t$ be the log
fundamental, unobserved, with
\begin{equation}
dx_t=\mu\,dt+\sigma\,dW_t,
\end{equation}
and let the market observe a noisy reference valuation
\begin{equation}\label{eq:obs}
dy_t=x_t\,dt+\eta\,dB_t,
\end{equation}
with $W,B$ independent. Write $p_t=\E[x_t\mid\F_t^{y}]$ for the market's log
price and $P_t$ for the posterior variance. Linear Gaussian filtering gives the
correction form
\begin{equation}
dp_t=\mu\,dt+\frac{P_t}{\eta^2}\,\big(dy_t-p_t\,dt\big),\qquad
\dot P_t=\sigma^2-\frac{P_t^2}{\eta^2},
\end{equation}
with steady state $P_\infty=\sigma\eta$ and gain $K_\infty=\sigma/\eta$. The
innovation $dy_t-p_t\,dt$ is the surprise in the reference, and the funding
correction $-\kappa(F_t-S_t)$ of Section~\ref{sec:unified} is exactly this term
in observer form, with $\kappa$ playing the role of the gain. The funding rule is
therefore a feedback observer, and the peg is its fixed point.

This reading makes the pre-IPO limitation precise. Before listing there is no
observation equation tied to an exogenous transaction price: the reference $y_t$
is itself an estimate, so $\eta$ is large, the steady state variance
$P_\infty=\sigma\eta$ is large, and the gain is small. The price is a wide
posterior that order flow can move, and there is no external term to pull it
back. At the Nasdaq open a hard observation arrives with $\eta\to0$, the gain
diverges, the posterior collapses onto the print, and the perpetual rebases.
Arbitrage closed is precisely the statement that an exogenous observation with
$\eta\to0$ is present. Before that, discovery is real, because order flow does
update $p_t$, but it is not closed, because nothing observes $x_t$ directly.

A mean reverting fundamental sharpens the picture. If instead
$dx_t=\alpha(\bar x-x_t)\,dt+\sigma\,dW_t$, the steady state posterior variance
solves $0=\sigma^2-2\alpha P_\infty-P_\infty^2/\eta^2$, giving
$P_\infty=\eta^2(\sqrt{\alpha^2+\sigma^2/\eta^2}-\alpha)$, which is increasing in
the observation noise $\eta$ and in the innovation variance $\sigma^2$ and
decreasing in the mean reversion $\alpha$. A firm with a stable, well anchored
fundamental is easy to discover even through a noisy reference, while a firm whose
value is itself a moving target stays uncertain until a hard observation arrives.
The gain $K_\infty=P_\infty/\eta^2$ is the funding intensity the market should run
to track it, which links the optimal $\kappa$ to the primitives.

\begin{figure}[h]
\centering
\includegraphics[width=0.74\textwidth]{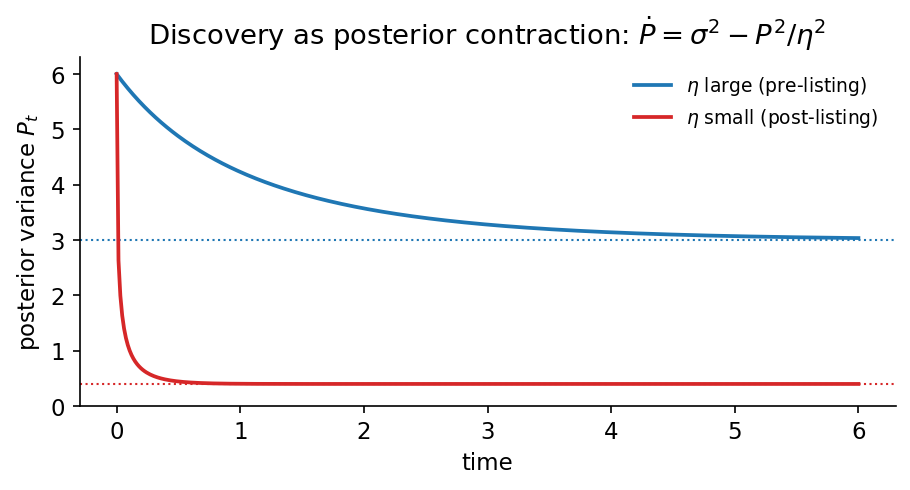}
\caption{The posterior variance $P_t$ solving $\dot P=\sigma^2-P^2/\eta^2$
contracts to its steady state $P_\infty=\sigma\eta$. A large observation noise
$\eta$, the pre-listing regime in which the only signal is a soft reference,
leaves a wide posterior; a small $\eta$, the hard observation supplied at the
Nasdaq open, collapses it. Discovery is the contraction of this variance.}
\end{figure}

\section{Bayesian martingale structure}\label{sec:bayes}

Three martingale facts organize the discovery story; each is standard, and
together they say what the perpetual is doing.

\paragraph{The price is a Doob martingale of the latent value.} Let
$\theta\in L^1(\F_\infty)$ be the value revealed at listing and
$p_t=\E[\theta\mid\F_t]$ the market's estimate. Then $(p_t)$ is a uniformly
integrable martingale, and by the martingale convergence theorem $p_t\to\theta$
almost surely and in $L^1$ as $\F_t\uparrow\F_\infty$. The conditional variance
$\mathrm{Var}(\theta\mid\F_t)$ is decreasing, and its drop is the information the
market has acquired; the listing is the time at which a hard observation makes
that drop discontinuous. This is the filtering of Section~\ref{sec:filtering}
stated as a convergence theorem: price discovery is the convergence of a Doob
martingale, and the funding rule is the mechanism that moves $p_t$ on each
innovation.

\paragraph{Cumulative settlement is a martingale difference.} Under the pricing
measure the per period settlement satisfies $\E_t^{\Q}[s_t+\Delta F_t]=0$ by
construction, so the cumulative gain $G_t=F_t+\Sigma_t$ is a $\Q$ martingale by
Proposition~\ref{prop:pv}. The perpetual admits no arbitrage if and only if such
an equivalent martingale measure exists, the fundamental theorem of asset pricing
localized to this contract (Harrison and Kreps 1979, Harrison and Pliska 1981).
The settlement stream is, up to sign, realized minus predicted, so a correctly
priced perpetual is a tradeable prequential score of the predictive sequence: it
pays when the prediction was wrong and nets to a martingale when the predictions
are calibrated.

\paragraph{Funding is a stochastic approximation.} Discretize and let order flow
move the price toward the target $m_k=\E[\theta\mid\F_k]$ in response to the
funding signal $g_k=\kappa(F_k-S_k)$,
\begin{equation}
F_{k+1}-F_k=-\alpha_k\,g_k+\xi_{k+1},\qquad \E[\xi_{k+1}\mid\F_k]=0.
\end{equation}
This is a Robbins and Monro (1951) recursion with fixed point the posterior mean.
With a decreasing gain satisfying $\sum_k\alpha_k=\infty$ and
$\sum_k\alpha_k^2<\infty$ it converges to $m_k$ when the target is fixed; with the
constant gain of a live contract it tracks a moving target, which is what a
nonstationary fundamental requires. The funding intensity $\kappa$ is the
learning rate, and the peg is the Bayesian fixed point. This is the same
correction as the Kalman gain of Section~\ref{sec:filtering}, now read as
stochastic optimization toward $\E[\theta\mid\F_t]$.

\section{A segmented market equilibrium for the basis}\label{sec:segment}

The premium decomposition of Section~\ref{sec:premium} treats segmentation as a
residual; here it is given structure. The perpetual is in zero net supply, a
swap, so every long is matched by a short. Suppose two classes of agent.
Unconstrained agents can hold the spot, or the IPO allocation, and the
perpetual, and arbitrage between them; constrained agents, retail or non-US, can
hold only the perpetual. Let $D\ge0$ be the constrained agents' aggregate desired
long position at zero basis, their excess demand for the exposure. Unconstrained
agents take the short side but require compensation, because shorting the
perpetual without a borrowable spot leaves unhedged risk and a margin cost; model
their inverse risk bearing capacity by $\lambda>0$, so they supply $q$ units short
only at basis $F-S=\lambda q$. Clearing $q=D$ gives the basis.

\begin{proposition}[Segmentation basis]\label{prop:seg}
With the funding cap of Proposition~\ref{prop:band},
\begin{equation}
F_t-S_t=\lambda\,D,\qquad
\lambda=\min\!\left(\frac{1}{\gamma_U},\ \frac{\bar g}{\kappa D}\right),
\end{equation}
where $\gamma_U$ is the aggregate risk tolerance of the unconstrained side. The
basis is the constrained demand priced at the cost of arbitrage capacity, capped
by the funding band.
\end{proposition}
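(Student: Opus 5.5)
The plan is to derive the basis from a market-clearing condition in the perpetual, then impose the funding band of Proposition~\ref{prop:band} as a ceiling. There are three steps: the unconstrained supply schedule, clearing against the constrained demand, and truncation at the band.

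\textbf{Supply.} I would model the unconstrained side as mean--variance agents with aggregate risk tolerance $\gamma_U$. An agent who is short $q$ units of the perpetual without a borrowable spot to hedge with collects the basis $F_t-S_t$ per unit. Against this it bears unhedged risk whose cost is quadratic in $q$, normalised so that the marginal cost is $q/\gamma_U$. The first order condition then gives the linear supply schedule $F_t-S_t=q/\gamma_U$. This identifies the inverse risk-bearing capacity as $\lambda=1/\gamma_U$ in the uncapped regime.

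\textbf{Clearing.} The perpetual is in zero net supply and the constrained agents cannot trade the spot, so every constrained long must be met by an unconstrained short. Setting $q=D$ gives
\[
F_t-S_t=D/\gamma_U .
\]
Here I would take $D$ as inelastic at the relevant basis levels, consistent with its definition as desired demand at zero basis. This avoids solving a fixed point in which demand itself responds to the basis.

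\textbf{Cap.} Proposition~\ref{prop:band} says funding disciplines the basis only while $\kappa|F_t-S_t|\le\bar g$. Once the clearing basis $D/\gamma_U$ exceeds $\bar g/\kappa$, funding saturates. The unconstrained short then earns at most $\bar g$ per unit from funding. I would argue that the funding-sustained basis is therefore capped at $\bar g/\kappa$, so that
\[
F_t-S_t=\min\bigl(D/\gamma_U,\ \bar g/\kappa\bigr).
\]
Writing this as $\lambda D$ with $\lambda=\min(1/\gamma_U,\ \bar g/(\kappa D))$ reproduces the stated formula. The form requires $D>0$; the case $D=0$ gives zero basis trivially.

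\textbf{Main obstacle.} The hard step is justifying the cap as an equilibrium outcome rather than a mechanical truncation. Proposition~\ref{prop:band} itself says that outside the band the remaining discipline is a convergence trade, which could in principle let the basis exceed $\bar g/\kappa$. I would handle this by stating as a maintained assumption that convergence traders are absent before listing, since there is no hard $S$. The funding-implied basis then binds: the perpetual's price deviation is disciplined only up to the level at which funding collection compensates the shorts, giving exactly $\bar g/\kappa$ in the capped regime. If this is not acceptable as an assumption, I would instead present the result as an upper bound on the funding-disciplined basis rather than an equality.
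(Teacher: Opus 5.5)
Your supply, clearing and truncation structure is the paper's own, and the uncapped case is fine. The paper posits an inverse supply schedule $F-S=\lambda q$ for the unconstrained shorts and clears it against inelastic $D$. Its CARA version gives $\lambda=\sigma^2/\gamma_U$, so your normalisation of the marginal risk cost to $q/\gamma_U$ is what produces the stated $1/\gamma_U$; the statement implicitly sets $\sigma^2=1$. The paper treats the cap only by truncation: the linear form is ``valid until the implied funding $\kappa\beta$ reaches the cap $\bar g$, beyond which Proposition~\ref{prop:band} takes over.'' So you have located the weak step correctly. The rescue you propose does not close it, however, and in fact pushes the other way.

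Under your own schedule, with capped funding a short's per-unit compensation is $g_t/\kappa=\min(F_t-S_t,\bar g/\kappa)$. Short supply is therefore $\gamma_U\min(F_t-S_t,\bar g/\kappa)\le\gamma_U\bar g/\kappa$ at every basis.

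In the capped regime $D>\gamma_U\bar g/\kappa$. At $F_t-S_t=\bar g/\kappa$ there is unmet long demand $D-\gamma_U\bar g/\kappa>0$, which bids $F_t$ up, and above the band nothing raises supply. Assuming convergence traders away therefore leaves the capped regime with no clearing basis at all; it does not pin the basis at the band edge. ``Disciplined only up to $\bar g/\kappa$'' means the basis is free to exceed $\bar g/\kappa$, not that it equals it. If, as in Proposition~\ref{prop:band}, the convergence trade is what operates outside the band, it clears at a basis at or above $\bar g/\kappa$. This matches the paper's own remark that the funding cap widens the basis rather than capping it.

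What the clearing argument actually gives, whenever the market clears, is:
\begin{itemize}
\item $F_t-S_t=D/\gamma_U$ inside the band;
\item $F_t-S_t\ge\bar g/\kappa$ outside it, with funding saturated;
\item hence $g_t/\kappa=\min(D/\gamma_U,\bar g/\kappa)$ in both regimes.
\end{itemize}
The min formula is thus exact for the funding-implied basis $g_t/\kappa$ and only a lower bound on the price basis. If your fallback ``upper bound'' is meant for $F_t-S_t$, the inequality points the wrong way. A provable version either restates the proposition for $g_t/\kappa$, or adds an explicit convergence-trade supply schedule above the band and solves for $F_t-S_t$ there.
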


A constant absolute risk aversion mean variance version delivers the same linear
form. Let the listing value be $\tilde v$ with $\E\tilde v=\bar v$ and
$\mathrm{Var}\,\tilde v=\sigma^2$. An agent of risk tolerance $\gamma$ holding $q$
units of the perpetual at basis $\beta=F-S$ has mean variance objective
$q(\bar v-S-\beta)-\tfrac{1}{2\gamma}q^2\sigma^2$, with optimum
$q^\star=\gamma(\bar v-S-\beta)/\sigma^2$. The unconstrained side faces an extra
margin and borrow penalty $\beta$ per unit short, so its supply of short units is
$\gamma_U\beta/\sigma^2$, increasing in the basis. The constrained side cannot
short the spot and brings inelastic excess demand $D$. Clearing the swap,
$\gamma_U\beta/\sigma^2=D$, gives $\beta=\sigma^2 D/\gamma_U$, the linear form
with $\lambda=\sigma^2/\gamma_U$, valid until the implied funding $\kappa\beta$
reaches the cap $\bar g$, beyond which Proposition~\ref{prop:band} takes over. As
$\gamma_U\to\infty$ the basis vanishes and the peg is exact, the deep liquid spot
of Section~\ref{sec:equiv}. As arbitrage capacity falls, through scarce borrow,
capital limits, or the funding cap, the basis widens in proportion to constrained
demand. For a private firm there is no borrow at all and the IPO allocation is
scarce, so $\gamma_U$ is small and $D$ is large, and the basis is large and
positive. The premium is then structural, a price of access and risk bearing, not
a forecast error, which is why it can persist without being arbitraged and why it
resolves only when listing and the end of the lockup restore $\gamma_U$.

\section{Estimation by generative Bayesian computation}\label{sec:gbc}

The model that has accumulated, a latent fundamental with stochastic volatility
observed through a capped funding mechanism, is a nonlinear non Gaussian state
space model, and the Gaussian filter of Section~\ref{sec:filtering} is only its
linear special case. Inference on the latent path and the parameters
$\Theta=(\mu,\sigma,\eta,\kappa,c,\bar g)$ from the observed series of price,
funding and volume is therefore likelihood free in practice. Generative Bayesian
computation (Polson and Sokolov 2024) replaces the likelihood with simulation and
a learned posterior map:
\begin{enumerate}[leftmargin=1.4em,itemsep=1pt]
\item draw $\Theta^{(i)}$ from the prior, simulate a path of price, funding and
volume, and reduce each to a vector of summary statistics $\mathbf s^{(i)}$;
\item train a conditional density or quantile estimator $\mathbf s\mapsto\Theta$
on the pairs $(\mathbf s^{(i)},\Theta^{(i)})$;
\item evaluate the estimator at the observed $\mathbf s^{\star}$ to draw from the
posterior $\Theta\mid\mathbf s^{\star}$;
\item propagate to the filtered fundamental $p_t$ and the basis decomposition of
Section~\ref{sec:premium}, with posterior uncertainty.
\end{enumerate}
The informative summaries are the mean and persistence of the funding rate, which
identify $\kappa$ and the band, the realized volatility and the level of the
funding carry, which identify $\sigma$ and $c$ of Section~\ref{sec:sv}, and the
basis level and its mean reversion, which identify the segmentation term. The
output is a posterior over the carry basis and the premium components rather than
point estimates.

\section{Discrete funding mechanics}\label{sec:mechanics}

The continuous settlement \eqref{eq:peg-pv} is an idealization of a discrete rule.
A live contract marks to a mark price $M_t$ and references an index price $I_t$,
and at funding times spaced by $h$, typically eight hours, the long pays the short
\begin{equation}
g_t=\Big(c+ \mathrm{clip}\big(\tfrac{M_t-I_t}{I_t},\,-\bar g,\,\bar g\big)\Big)\,h\,N_t,
\end{equation}
where $N_t$ is the position notional, $c$ a fixed interest component, and the clip
imposes the cap of Proposition~\ref{prop:band}. The premium $(M_t-I_t)/I_t$ is the
discrete basis, and the per period funding is an affine, clipped function of it, so
the continuous intensity $\kappa$ is the slope of the funding rule and $\bar g$ its
saturation. Two features matter for the readings above. First, funding is charged
on notional and settled in the quote asset, here USDC, so a long is exposed to the
funding rate as a stochastic carry, which is the $b_t$ of
Section~\ref{sec:timechange}; by Corollary~\ref{cor:svirr} this carry moves the
basis only when it fails to match the spot's own drift. Second, between funding
times the contract is disciplined only by the mark mechanism and liquidations, not
by the funding flow, so the band of Proposition~\ref{prop:band} is a statement
about averages across funding intervals rather than an instantaneous bound. For a
private underlying the index $I_t$ is not a transaction price but a reference
valuation, which is the soft anchor that Sections~\ref{sec:filtering}
and~\ref{sec:segment} analyze.

\section{The SpaceX pre-IPO market: a worked example}\label{sec:spacex}

SpaceX was private with no continuously observable spot and no dividend, so it is
the Shiller use case stated literally: an equity claim whose price is difficult
or impossible to observe before listing. The instrument the market built for it
was the crypto funding variant of Corollary~\ref{cor:cases}(ii), not the
dividend variant. The contract was USDC settled, traded continuously with no
expiry, referenced an implied pre-IPO valuation, and transitioned automatically
to the live equity perpetual at listing with no rollover. Holders received price
exposure only, with no voting rights, no dividends, and no claim on assets.

\begin{table}[h]
\centering
\small
\begin{tabular}{lll}
\hline
Date & Event & Price \\
\hline
May 2026 & Pre-IPO perpetual high & above \$220 \\
Early June & Pre-IPO perpetual & near \$180 \\
June 11 & IPO priced & \$135 \\
June 12 & Nasdaq open, then first close & \$150, then \$160.95 \\
June 12 & SPCX-USDC perpetual & near \$176 \\
June 16 & All time high & \$225.64 \\
June 18 & Close & \$185.00 \\
\hline
\end{tabular}
\caption{SpaceX listing and the SPCX perpetual, May to June 2026.}
\end{table}

The pre-IPO market was large. Binance listed SPCXUSDT on May 21, 2026, the
contract became its second most traded futures behind BTCUSDT, and cumulative
SpaceX linked derivative volume ran above \$9 billion across venues with single
day volume near \$5.7 billion. On listing day the SPCX-USDC perpetual on
Hyperliquid traded near \$176, about 30 percent above the offer, with 24 hour
volume above \$233 million and open interest above \$263 million.

\paragraph{The inversion.} The crypto consensus near \$180 was a better forecast
of the secondary clearing price near \$185 than the underwritten \$135 was. The
fixed offer was an administered price set below the market, the standard
underpricing of a mega-IPO, and the perpetual called it. So the optional stopping
identity $\E^{\Q}[F_T]=F_0$ must be read against the secondary tape and not the
offer: the terminal value $S_T$ is the listing price the market discovered, near
\$185, not the \$135 print, which carried no order flow. With $F_0\approx 180$ and
$S_T\approx 185$ the perpetual was a $\Q$ martingale to within a few percent. The
pre-IPO long near \$180 was not underwater against the tape.

\begin{figure}[h]
\centering
\includegraphics[width=0.78\textwidth]{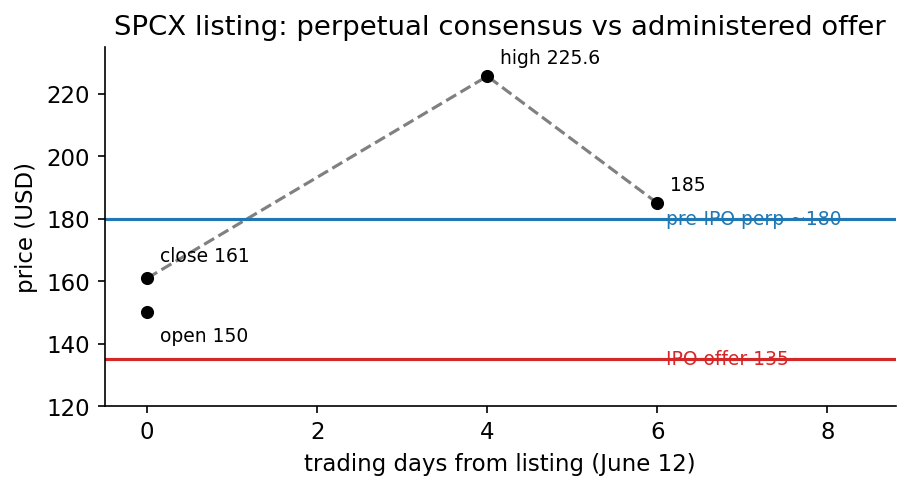}
\caption{The pre-IPO perpetual consensus near \$180 forecast the secondary
clearing price more accurately than the administered \$135 offer. Markers are
observed SPCX prints; the dashed line is a guide. The offer sat well below the
market the perpetual had already discovered.}
\end{figure}

\paragraph{Filtering reading.} The posterior of Section~\ref{sec:filtering} did
not collapse at the offer. It collapsed over June 12 to 16 as Nasdaq trading
delivered the first hard observations, with the posterior mean migrating
$135\to150\to161\to225\to185$. The posterior variance stayed large, a four day
range of \$149 to \$225, because the lockup keeps the float tiny, so $\eta$ in
\eqref{eq:obs} is still large and the steady state $P_\infty=\sigma\eta$ has not
contracted. The December lockup expiry enlarges the float, lowers $\eta$, raises
the gain, and is the point at which the posterior should finally settle toward
fundamental value.

\section{Reading the premium}\label{sec:premium}

The pricing sections above are stated under $\Q$. The pre-IPO \$180 against the
\$135 offer is a statement under the physical measure $\Prob$ with frictions, and
it decomposes as
\begin{equation}
\log\!\frac{F_t}{S_t^{\,\mathrm{IPO}}}
=\underbrace{\theta_t}_{\text{risk premium}}
+\underbrace{\zeta_t}_{\text{segmentation}}
+\underbrace{\delta_t}_{\text{discovery}}
+\underbrace{\beta_t^{-1}\,b_t}_{\text{bubble}} .
\end{equation}
The risk premium $\theta_t$ is the change of measure carried by a leveraged
holder of price risk and has ambiguous sign. The segmentation term $\zeta_t$ is
the value of synthetic access of Section~\ref{sec:segment}, $\zeta_t\approx
\lambda D/S_t$, and is strictly positive. The discovery term $\delta_t$ is the
gap between the administered offer and the market clearing value, also positive
here. The bubble term has $\beta_t B_t$ a $\Q$ martingale and is admissible only
if the transversality condition of Proposition~\ref{prop:pv} fails; its
behavioural counterpart is extrapolative demand. The realized outcome, a
secondary price near \$185 against the \$135 offer, shows that the discovery and
segmentation terms dominated and that the bubble term was small. The perpetual
was not euphoric. It was right. The December lockup expiry, by enlarging the
float and relaxing $\zeta_t$, is the natural experiment that isolates the
remaining terms.

The magnitudes are consistent with the segmentation model. The log premium of the
perpetual over the secondary price was small, the perpetual near \$176 to \$180
against a tape near \$185, so $\zeta_t+\delta_t$ measured against the secondary
market is close to zero and the bulk of the \$180 over \$135 gap is the discovery
term $\delta_t$, the underwriters' discount. Against the offer the implied premium
$\log(180/135)\approx0.29$ splits into a discovery term of about $\log(185/135)
\approx0.31$ and a small negative residual, leaving little room for a bubble. By
Proposition~\ref{prop:seg} the residual segmentation premium pins down the ratio
$\sigma^2 D/\gamma_U$ of constrained demand to arbitrage capacity; its near
vanishing once the spot is hard is the model's prediction that the basis collapses
when borrow and allocation become available at listing.

\section{Other applications to equities}\label{sec:apps}

Beyond pre-IPO discovery the same instrument addresses several long standing
problems.

\paragraph{Private and pre-IPO price discovery.} The SpaceX mechanism applies to
any late stage private firm, such as OpenAI, Stripe or Databricks. The perpetual
consensus is a continuous when-issued price for a claim that otherwise reprices
only at funding rounds, and Section~\ref{sec:filtering} describes exactly how
informative it is.

\paragraph{Synthetic shorting and the implied borrow rate.} Where stock borrow is
expensive or unavailable, the perpetual provides synthetic short exposure, and
the funding rate clears the short demand. In the steady state of
Proposition~\ref{prop:band} the funding rate equals the marginal cost of the
arbitrage, so it is a continuous, observable proxy for the securities lending fee,
cleaner than the opaque stock loan market.

\paragraph{Restricted and locked-up equity.} An employee or pre-IPO holder who
cannot yet sell can short the perpetual to hedge, paying the funding carry as the
cost of the hedge. This is the concentrated position problem, and it is the live
question for SpaceX holders through the December lockup.

\paragraph{Untraded macro and real indices.} Shiller's original targets, the
Case-Shiller home price index, the consumer price index, wage and GDP indices,
all have an observable flow and so suit the dividend variant of
Corollary~\ref{cor:cases}(i). A perpetual on such an index lets a household hedge
human capital or housing risk without a transactable spot.

\paragraph{Perpetual dividend strips.} A perpetual settled on a dividend index is
a traded Gordon model: by Corollary~\ref{cor:cases}(i) with a constant flow its
price is $D/\kappa$, separating dividend risk from price risk with no expiry and
no roll.

\paragraph{Roll-free index, factor and variance exposure.} Funding replaces the
roll, removing the negative roll yield of calendar futures. A perpetual on
realized variance, with funding keyed to the realized variance of the underlying,
gives no expiry volatility exposure and is the direct instrument for the
stochastic volatility of Section~\ref{sec:sv}.

\paragraph{Event and when-issued discovery.} The same continuous market prices
SPACs, spinoffs, direct listings and merger targets, where the administered or
infrequent price leaves a discovery gap of the kind seen at the SpaceX listing.

\paragraph{Collateral and continuous access.} Because the contract is cash settled
in a stablecoin and trades without interruption, it gives round the clock equity
exposure outside exchange hours and a collateral efficient synthetic position for
holders who want exposure without custody of the share. The same property that
makes it useful, settlement against a reference rather than delivery of the asset,
is the source of the soft anchor of Section~\ref{sec:segment}: the instrument is
only ever as well disciplined as the reference it settles against, which is the
recurring theme of this paper.

\section{Discussion}\label{sec:discussion}

\textbf{Establishes.} A perpetual contract can manufacture a tradable, continuous
price for a private equity claim that has no spot and no dividend, the original
Shiller goal carried out for equity rather than for homes or wages. The funding
peg sustained a deep two sided market, and its pre-IPO consensus forecast the
secondary clearing price more accurately than the bookbuilt offer.

\textbf{Does not establish.} By Proposition~\ref{prop:band} the peg holds only
inside the band $\bar g/\kappa$, and by Section~\ref{sec:filtering} the pre-IPO
band is centred on a soft reference rather than a hard spot, so before listing the
discipline is to an estimate and not to an arbitrage enforceable price. By
Proposition~\ref{prop:seg} a large part of the premium is the segmentation term
rather than information. The forecast was vindicated against the secondary tape,
but the float is still small and the posterior variance still large, so the test
of fundamental value as opposed to clearing price is deferred to the December
lockup expiry.

\textbf{Reading.} Shiller's dividend settled design prices an unobservable asset
by anchoring to a real flow and is self contained. The crypto funding design
anchors to a basis and needs an external reference, which for a private firm is
soft. By Proposition~\ref{prop:equiv} the two coincide once a hard spot exists,
and by Proposition~\ref{prop:sub} both are the expected spot at a funding clock.
The robust part is the instrument design. The fragile part, made precise by the
filtering and segmentation readings, is the quality of the anchor when no hard
price exists. The dividend variant that Shiller proposed, settling on a real
cash flow, remains largely unbuilt, and is the natural next instrument for the
macro and real indices of Section~\ref{sec:apps}.

\textbf{Predictions.} The framework is falsifiable. First, by
Proposition~\ref{prop:seg} the basis should collapse toward zero as the float
grows, so the perpetual to spot premium should fall sharply across the December
lockup expiry; a persistent premium afterward would reject the segmentation
reading in favour of a bubble. Second, by Section~\ref{sec:apps} the funding rate
should track the securities lending fee once the stock is borrowable, so the two
should converge post listing. Third, by Corollary~\ref{cor:svirr} the basis is
insensitive to the spot's volatility and responds only to its carry, so a near
martingale equity should peg tightly however volatile it is, while a variance
perpetual, whose benchmark is volatility itself, should carry any volatility risk
premium.
Each is measurable with the data the contracts already generate, and
Section~\ref{sec:gbc} gives the estimation route.

\textbf{Conclusion.} Shiller's perpetual future was a proposal to price what
cannot be traded. The crypto markets supplied the funding mechanism that made it
work at scale, and the SpaceX listing showed that the resulting price can beat the
administered alternative for a private equity claim. The unifying object is the
funding rate, which is simultaneously a discount rate in
Proposition~\ref{prop:pv}, a clock in Proposition~\ref{prop:sub}, an observer gain
in Section~\ref{sec:filtering}, and a learning rate in Section~\ref{sec:bayes}.
What remains fragile is the anchor, and the cleanest open instrument is the
dividend settled variant Shiller first described, which needs no anchor at all.

\section*{References}

\begin{itemize}[leftmargin=1.4em,itemsep=2pt]
\item Ackerer, D., Hugonnier, J., and Jermann, U. (2024). Perpetual Futures
Pricing. Working paper, EPFL and Wharton.
\item Cox, J. C., Ingersoll, J. E., and Ross, S. A. (1981). The relation between
forward prices and futures prices. Journal of Financial Economics, 9(4),
321--346.
\item Doob, J. L. (1953). Stochastic Processes. Wiley, New York.
\item Harrison, J. M., and Kreps, D. M. (1979). Martingales and arbitrage in
multiperiod securities markets. Journal of Economic Theory, 20, 381--408.
\item Harrison, J. M., and Pliska, S. R. (1981). Martingales and stochastic
integrals in the theory of continuous trading. Stochastic Processes and their
Applications, 11, 215--260.
\item He, S., Manela, A., Ross, O., and von Wachter, V. (2022). Fundamentals of
Perpetual Futures. arXiv:2212.06888.
\item Kalman, R. E. (1960). A new approach to linear filtering and prediction
problems. Journal of Basic Engineering, 82(1), 35--45.
\item Liptser, R. S., and Shiryaev, A. N. (2001). Statistics of Random Processes.
Springer, Berlin.
\item Polson, N. G., and Sokolov, V. (2024). Generative Bayesian Computation.
Working paper.
\item Robbins, H., and Monro, S. (1951). A stochastic approximation method.
Annals of Mathematical Statistics, 22(3), 400--407.
\item Shiller, R. J. (1993a). Measuring Asset Values for Cash Settlement in
Derivative Markets: Hedonic Repeated Measures Indices and Perpetual Futures.
Journal of Finance, 48(3), 911--931.
\item Shiller, R. J. (1993b). Macro Markets: Creating Institutions for Managing
Society's Largest Economic Risks. Oxford University Press, Oxford.
\item Market and venue data: cnbc.com, investing.com, tradingview.com,
cryptobriefing.com and barchart.com, June 2026.
\end{itemize}

\end{document}